\documentclass[conference]{IEEEtran}
\usepackage{amsmath,amssymb,amsthm,dsfont,stfloats,color,subfigure,cite}
\usepackage{hyperref}
\usepackage[pdftex]{graphicx}

\newtheorem{thm}{Theorem}

\newtheorem{lem}{Lemma}
\newtheorem{corol}{Corollary}


\def\argmax{\mathop{\rm argmax}}

\def\snr{{\small\textsf{SNR}}}

\def\inr{{\small\textsf{INR}}}

\setlength\unitlength{1mm}

\long\def\comment#1{}


\newfont{\bbb}{msbm10 scaled 700}

\newfont{\bbc}{msbm10 scaled 1100}






\newcommand{\Cc}{{\cal C}}

\newcommand{\Mc}{{\cal M}}

\newcommand{\Oc}{{\cal O}}

\newcommand{\Qc}{{\cal Q}}

\newcommand{\Sc}{{\cal S}}



\newcommand{\Sigmam}{\hbox{\boldmath$\Sigma$}}


\newcommand{\trace}{{\hbox{tr}}}


%

\def\pc{{\pmb c}}\def\ph{{\pmb h}}\def\pn{{\pmb n}}\def\pw{{\pmb w}}\def\px{{\pmb x}}\def\py{{\pmb y}}\def\pz{{\pmb z}}

\def\pC{{\pmb C}}\def\pH{{\pmb H}}\def\pI{{\pmb I}}\def\pR{{\pmb R}}\def\p0{{\pmb 0}}

\def\ct{\mathsf{H}}

\IEEEoverridecommandlockouts

\begin{document}

\title{A Codebook-Based Limited Feedback System\\ for Large-Scale MIMO}

\author{\authorblockN{Junyoung Nam}\\
\authorblockA{ Wireless Communications Division\\ Electronics and Telecommunications Research Institute (ETRI), Daejeon, Korea } 
}

\maketitle

\begin{abstract}
In this paper, we consider limited feedback systems for FDD large-scale (massive) MIMO. A new codebook-based framework for multiuser (MU) MIMO downlink systems is introduced and then compared with an ideal non-codebook based system. We are particularly interested in the less-known finite-rate feedback regime where the number $M$ of transmit antennas and the number of users are of the same order of magnitude and $M$ is large but finite, which is a typical scenario of large-scale MIMO. We provide  new findings in this regime and identify some benefits of the new framework in terms of scheduling gain and downlink dedicated pilot overhead.

\end{abstract}

\section{Introduction}
\label{sec:intro}

Limited feedback MU-MIMO downlink systems may be categorized according to the use of a common codebook (CB). One is non-CB-based precoding with a direct (analog) quantization of channel vectors like zero-forcing beamforming (ZFBF) and the other is CB-based precoding, the latter of which refers to an opportunistic beamforming scheme like random beamforming (RBF) \cite{Sha05} where each user terminal (user) selects the channel direction information (CDI) with the maximum signal-to-interference-plus-noise ratio (SINR) and reports the CDI and SINR to the basestation (BS). By the CB-based scheme also employed in some practical systems (e.g., \cite{sesia2009lte}), we stipulate that the BS uses precoding vectors only in a common CB also known to all users.



In the past few years, we have paid great attention to large-scale MIMO \cite{Mar10}. A typical scenario therein is that the number of antennas, $M$, at the BS is larger than the number of users, $K$. 
Since the study on MU-MIMO systems has traditionally focused on the $M\ll K$ case,
new insights may be needed taking into account the much lesser known case where $M$ and $K$ have the same order of magnitude (denoted by $M\sim K$) with $M$ \emph{not very large}. For frequency division duplex (FDD) large-scale MIMO systems, limited feedback schemes based on noncoherent trellis-coded quantization \cite{Cho13} or compressive sensing \cite{Kuo12} were introduced, where the former is to reduce encoding complexity of channel quantization. More practical CB-based approaches have been considered not suitable for large-scale MIMO due to severe performance degradation. 
The goal of this work is not just to reduce CSI feedback overhead but rather to improve the performance of CB-based systems at the cost of a moderate feedback increase compared to the conventional CB-based approach.

A main issue for our purpose is as to whether to achieve a full scheduling gain. 
While ZFBF is a popular sub-optimal linear precoding scheme, its CDI feedback overhead (i.e., quantization bits) should grow linearly with $M\log_2\text{SNR}$ to achieve the full degrees of freedom \cite{Jin06}, i.e., unless the number of quantization bits is sufficiently large, the non-CB-based approach with ZFBF is subject to severe performance degradation. Also, it is quite sensitive to noisy channel state information at the transmitter (CSIT) due to channel estimation errors. However, ZFBF can provide relatively accurate SINRs (relatively tight lower bound on the exact value) of any selected users \cite{Yoo07}, which allows \emph{flexible scheduling} at the BS. The capability of flexible scheduling is very important unless $M\ll K$. If $M\sim K$, the probability that the scheduler selects $\min\{M,K\}$ users is quite low irrespectively of ZFBF or even the capacity-achieving dirty paper precoding \cite{Cai03} unless the SNR is sufficiently high. Thus, the BS should be able to estimate a variety of SINRs per user with the number $S$ of selected users less than $M$.   
Assuming perfect CSI at receivers, the CB-based approach allows users to calculate the exact SINRs for any given subset of beams in CB.  However, it requires each user to report an individual SINR per subset, thus making flexible scheduling infeasible as the number of all possible subsets becomes extremely large unless $M$ is very small. This limited scheduling explains to some extent why a CB-based system suffers significant degradation for $M$ large.

The second but more important issue for $M$ large is the cost of downlink dedicated (user-specific) pilot for coherent detection. 
For sufficiently large $M$ such that there is no interference between users at all, downlink dedicated pilot is unnecessary even for the simple maximal-ratio transmission (MRT) scheme \cite{Mar10} with perfect CSIT, since the $\ell^2$ norm of user channel vectors converges to a real positive deterministic value due to the channel hardening effect \cite{Hoc04} in the i.i.d. Rayleigh fading channel. However, if inter-user interference is somehow present, the use of orthogonal dedicated pilots would be essential for the system performance regardless of FDD or time division duplex (TDD) and of how $\min\{M,K\}$ is large. For instance, ZFBF suffers from its high sensitivity to imperfect (noisy or outdated) CSIT \cite{Yoo06}, which causes significant interference as $S$ increases for $M$ large. Therefore, assuming imperfect CSIT, we need orthogonal downlink dedicated pilots even in ZFBF as well as MRT and codebook-based schemes. As $\min\{M,K\}$ increases, the number $S$ of selected users will also grow and then the downlink dedicated pilots consume more resources, which ironically reduces the remaining resources for data streams, thus rather degrading the system performance. 
As a consequence, the cost of downlink dedicated pilot becomes a significant bottleneck to the performance of large-scale MIMO systems with \emph{imperfect CSIT}
The resulting performance limit of large-scale MIMO was characterized in \cite{Nam14a}.

In order to address the above issues for the large-scale array regime, we propose a new framework for CB-based MU-MIMO systems. The proposed framework takes advantage  of a new type of channel quality information (CQI) feedback. The framework is shown to enable flexible scheduling with moderate CQI feedback overhead and to significantly reduce the cost for dedicated pilot. 

\emph{Notations:} $A\sim B$ means the quantities $A$ and $B$ have the same order of magnitude.  





\section{Preliminaries}
\label{sec:Pre}

\subsection{System Model}

We consider a single cell MIMO downlink channel where the BS has $M$ transmit antennas and $K$ users have a single antenna each.
Let ${\pH}$ denote the $M\times K$ system channel matrix given by
stacking the $K$ users channel vectors $\ph$ by columns. The signal vector received by the users is given by
\begin{align} \label{eq:Pre-1}
   \py={\pH}^\mathsf{H}\px +\pz
\end{align}
where $\px$ is the input signal with $S$ the rank of the input covariance
$\boldsymbol{\Sigma}=\mathbb{E}[\px\px^\ct]$ (i,e., the total number of independent data streams) and $\pz  \sim\mathcal{CN} (\p0, \pI)$ is the Gaussian noise (plus other-cell interference in multi-cell environment) at the receivers. The system has the total power constraint such that $\trace(\Sigmam)\le P$, where $P$ implies the total transmit SNR. Denoting by $p_k$, $\pw_k$, and $d_k$ the power, linear precoding vector, and information symbol of user $k$, respectively, the signal received by the selected user $k$ can be represented as
\begin{align} \label{eq:Pre-2}
   y_k=\sqrt{p_k}\ph_k^\ct\pw_k d_k +\sum_{j\in \Sc,j\neq k}\sqrt{p_j}\ph_k^\ct\pw_j d_j +z_k, \ k \in \Sc 
\end{align}
where $\Sc \subseteq[1:K]$ is the set of users selected by the BS scheduler such that $|\Sc|=S\le \min(M,K)$ and $\sum_{k\in \Sc} p_k\le P$. 
The SINR of user $k$ is given by
\begin{align} \label{eq:Pre-4}
   \gamma_k = \frac{|p_k\ph_k^\ct\pw_k|^2}{\sigma_k^2+\sum_{j\in \Sc,j\neq k}|{p_j}\ph_k^\ct\pw_j|^2}.
\end{align}

A popular limited feedback system model is based on the user feedback in terms of CDI and CQI. While CDI of user $k$ conveys CSI on the channel direction $\ph_k/\|\ph_k\|$, CQI is often represented by the received SINR or the norm of $\ph_k$.

For ZFBF with perfect CSIT, the CDI of user $k$ becomes the channel vector $\ph_k$ itself and the BS needs no CQI feedback but $\sigma_k^2$ (i.e., noise plus other-cell interference power) of users for rate adaptation of each user. In this case, the BS can calculate the exact SINR in (\ref{eq:Pre-4}) of user $k$. We assume no quantization for ZFBF in this paper to provide a more clean view, even though the comparison with a limited feedback scheme is unfair. 
Without perfect CSIT, we often resort to a CB-based scheme in practice, as will be reviewed in the sequel.

\subsection{Typical Framework for CB-Based MU-MIMO Systems}

\subsubsection{CDI Feedback}

For convenience, a CB $\boldsymbol{\Cc}$ of size $MT$ based on unitary beamforming is considered for the CDI quantization. 
One papular way is to use the well-known discrete Fourier transform (DFT) based approach already employed in real-world systems like LTE as follows:
The CB is constructed by truncating the first $M$ rows of the $MT$-point DFT matrix and by normalizing to satisfy the unit-norm property of precoding vector $\pw_k\in\boldsymbol{\Cc}$ for all $k$. The resulting precoding vector $\pc_\ell$ is given by 
\begin{align} \label{eq:Pre-3}
   \pc_\ell = \frac{1}{\sqrt{M}}\Big[1 \ e^{-j2\pi \ell\frac{1}{MT}} \ e^{-j2\pi \ell\frac{2}{MT}} \cdots \ e^{-j2\pi \ell\frac{M-1}{MT}}\Big]^T 
\end{align}
where $\ell = \ell_2T + \ell_1$ with $\ell_1=0,1,\cdots, T-1$ and $\ell_2=0,1,\cdots,M-1$, thus yielding the expression of $\ell\triangleq(\ell_1,\ell_2)$. We also denote by 
\begin{align} \label{eq:Pre-3b}
   \pC_t =\{\pc_\ell: \ell_1=t, \forall \ell_2\}  
\end{align} 
a subset of $\boldsymbol{\Cc}$ which forms a unitary matrix. For notational convenience we will use the notation $\pc_m^{(t)}$, where $t=\ell_1$ and $m=\ell_2$, instead of $\pc_\ell$, if necessary. 

We follow a similar way to \cite{sesia2009lte, Yan10}. Assuming all $M$ beams for each $\pC_t$ are scheduled with the equal power allocation, i.e., $p_k=\frac{P}{M}$, user $k$ selects the best beam index $\ell_k= (t_k,m_k)$ that maximizes the corresponding SINR such that   
\begin{align} \label{eq:Pre-5}
   (t_k,m_k) = \argmax_{t,m} \gamma_k(t,m).
\end{align}
where $$\gamma_k(t,m) = \frac{\big|\ph_k^\ct\pc_m^{(t)}\big|^2}{\frac{M}{P}\sigma_k^2+\sum_{j\in [1:M],j\neq m}\big|\ph_k^\ct\pc_j^{(t)}\big|^2}.$$
Each user is assumed to report its CDI as a $\lceil\log_2 MT\rceil$-bit index $\ell_k$ through error-free and delay-free uplink to the BS.

\subsubsection{CQI Feedback}

The CQI feedback is intended for the BS to estimate the SINRs experienced by users for scheduling and rate adaptation. An accurate estimate of the SINRs is generally complicated in the limited feedback case as the exact interference cannot be evaluated at users without a priori information on the scheduling decision. Notice that the CDI in (\ref{eq:Pre-5}) is just the result on a best-effort basis at the individual user side. Therefore, scheduling with such CDI and CQI is restrictive especially when $K$ is small, thus yielding performance degradation.  In general each user is required to report either the SINR $\gamma_k(\ell_k)$ corresponding to $\ell_k$ or its achievable rate $\log_2(1+\gamma_k(\ell_k)).$ 


\subsubsection{Scheduling}

A user selection algorithm can be given as follows: Let $\Sc_m^{(t)} =\{k: \ell_k=(t,m)\}$ and $\Sc^{(t)}=\emptyset$.
For each $(t, m)$, select the best user such that 
\begin{align} \label{eq:Pre-6}
   k^* &= \argmax_{k\in \Sc_m^{(t)}} \log_2(1+\gamma_k(\ell_k)) \nonumber \\
   \Sc^{(t)} &= \Sc^{(t)} \cup \{k^*\}.
\end{align}
Then, select the best $t$ such that
\begin{align} \label{eq:Pre-7}
   t^* &= \argmax_t \sum_{k\in \Sc^{(t)}} \log_2(1+\gamma_k(\ell_k)) \nonumber \\
   \Sc &=  \Sc^{(t^*)}.
\end{align}

It is easy to see that the above scheduling is restrictive due to limited CQI feedback.
Moreover, while CQI is fed back assuming all $M$ beams are scheduled, the number $S$ of selected users is often less than $M$, thus implying the power allocation of $\frac{P}{M}$ does not make sense but users have no \emph{a priori} information on this. 
Even if we restrict to schedule beams less than $M$, the BS hardly knows which beams are better to select and how many beans are good enough. 
As a result, the scheduler for DFT-SINR makes use of (\ref{eq:Pre-5}) as a lower bound on the exact SINR $\gamma_k$ in (\ref{eq:Pre-4}).




We refer to the above scheme as DFT-SINR in this paper as it uses the DFT-based beamforming with SINR feedback.

\section{New Framework based on INR Feedback}
\label{sec:NF}

In the typical framework, each user is required to feed back the maximum SINR in (\ref{eq:Pre-5}), assuming that the BS schedules and transmits all the $M$ beams for each $\pC_t$ and that all users have the equal transmit power $P/M$. Although these assumptions are inevitable due to feedback overhead, they do not make much sense in realistic systems unless $K\gg M$, which is even more infeasible for $M$ large. Therefore, we would like the scheduler to select users in a fully flexible manner over all possible $s$-combinations of $[1:MT]$ for $s=1,\cdots, M$, whose size is $\sum_{s=1}^M C^{MT}_s$, where $C^{MT}_s= \tbinom{MT}{ s}$.  Denote by $\Mc_s$ the set of each $s$-combinations of $[1:MT]$ and by $\Mc_{s,n}$ its $n$th element (it is also a set of size $s$), where $n=1,2,\cdots,C^{MT}_s$. Assuming the equal power allocation again such that $p_k=\frac{P}{s}$, the SINRs of user $k$ with respect to precoding vectors in $\Mc_{s,n}$ can be expressed as  
\begin{align} \label{eq:NF-1}
   \gamma_k(s,n,q) = \frac{\big|\ph_k^\ct\pc_q\big|^2}{\frac{s}{P}\sigma_k^2+\sum_{j\in \Mc_{s,n}\setminus q}\big|\ph_k^\ct\pc_j\big|^2}, \ q \in \Mc_{s,n}.
\end{align}

Although the above SINRs of all users allow the BS to optimally select users for a given CB $\boldsymbol{\Cc}$, the overhead of $\sum_{s=1}^M C^{MT}_s$ CQIs per user is prohibitively large even for $M$ small. 
In order to overcome this difficulty, we propose a new framework based on the following CQI feedback.

\subsection{INR Feedback}
\label{sec:NF-A}

For each beam $\pc_\ell$ in $\boldsymbol{\Cc}$, we define the corresponding interference to noise ratio (INR) as
\begin{align} \label{eq:NF-3}
   \inr_{k,\ell} &= \frac{\big|\ph_{k}^\ct\pc_\ell\big|^2}{\sigma_{k}^2}, \ \ell \in [1:MT].
\end{align}
If the above INRs are available at the BS, the scheduler can select one of them as the numerator of SINR and then calculate the corresponding all possible SINRs in (\ref{eq:NF-1}) for user $k$, as easily shown by
\begin{align} \label{eq:NF-5}
   \frac{\big|\ph_k^\ct\pc_q\big|^2}{\frac{s}{P}\sigma_k^2+\sum_{j\in \Mc_{s,n}\setminus q}\big|\ph_k^\ct\pc_j\big|^2} 
   &=    \frac{\frac{\big|\ph_k^\ct\pc_q\big|^2}{\sigma_k^2}}{\frac{s}{P}+\sum_{j\in \Mc_{s,n}\setminus q}\frac{\big|\ph_k^\ct\pc_j\big|^2}{\sigma_k^2}} \nonumber \\
   &= \frac{\inr_{k,q}}{\frac{s}{P}+\sum_{j\in \Mc_{s,n}\setminus q}\inr_{k,j}}.
\end{align}
Therefore, the BS needs only $MT$ INRs per user to accurately estimate all possible $\sum_{s=1}^M C^{MT}_s$ SINRs per user for a given $\Cc$.

This new type of CQI feedback indeed enables \emph{fully flexible scheduling} for a given $\boldsymbol{\Cc}$. It is easy to see that the INR feedback in (\ref{eq:NF-3}) also suggest \emph{no need for any CDI feedback} on beam index $\ell_k$ like (\ref{eq:Pre-5}). 
The above INR feedback is referred to as the \emph{full INR feedback} as it allows fully flexible scheduling. However, $MT$ INRs per user may incur too much overhead especially for $M$ large, even if small $T$ (e.g., 2)  is generally suitable for CB-based MU-MIMO schemes. 

In order to reduce the feedback overhead, we propose another INR feedback scheme that  allows the BS to select users only with the same subset $\pC_t \subseteq \boldsymbol{\Cc}$ in (\ref{eq:Pre-3b}). We call it the \emph{partial INR feedback}. Selecting $t_k$ such that 
$$t_k= \argmax_{t\in [1:T]} \max_{m\in [1:M]}\big|\ph_{k}^\ct\pc_m^{(t)}\big|^2$$ 
user $k$ feeds back 
\begin{align} \label{eq:NF-4}
   \inr_{k,m}^{(t_k)} &= \frac{\big|\ph_{k}^\ct\pc_m^{(t_k)}\big|^2}{\sigma_{k}^2}, \ m \in [1:M]
\end{align}
along with the selected subset index $t_k$.
Thus the partial INR scheme entails the feedback overhead of $M$ INRs and one $\lceil\log_2 T\rceil$-bit index $t_k$ per user. 
Denoting by $\Mc_s^{(t_k)}$ the set of all $s$-combinations of $[1:M]$ (i.e., the power set of $[1:M]$) and by $\Mc_{s,n}^{(t_k)}$ its $n$th element with $|\Mc_{s,n}^{(t_k)}|=s$, where $n=1,2,\cdots,C^{M}_{s}$, the BS with this partial INR feedback from each user can calculate the following SINRs
\begin{align} \label{eq:NF-2}
   \gamma_k(t_k,s,n,q) = \frac{\big|\ph_k^\ct\pc_q^{(t_k)}\big|^2}{\frac{s}{P}\sigma_k^2+\sum_{j\in \Mc_{s,n}^{(t_k)}\setminus q}\big|\ph_k^\ct\pc_j^{(t_k)}\big|^2}, \ q \in \Mc_{s,n}^{(t_k)}
\end{align}
for all $k, s,n,$ and $q$.

\subsection{Benefits of INR Feedback}
\label{sec:NF-B}

\subsubsection{Flexible Scheduling}

The fully flexible scheduling enabled by the full INR feedback (\ref{eq:NF-3}) is to select users over all possible combinations with SINRs $\gamma_k(s,n,q)$ in (\ref{eq:NF-1}) for all $k,s,n,$ and $q$. Clearly, this is the optimal scheduling for a given CB $\boldsymbol{\Cc}$ within the typical framework in Section \ref{sec:Pre}. 
In contrast, the scheduling with the partial INR feedback (\ref{eq:NF-4}) and SINRs in (\ref{eq:NF-2}) is naturally under the constraint of scheduling users only with the same $\pC_t$ such that their precoding vectors are orthogonal to each other. 

A simple algorithm for the partial INR feedback can be given as follows.
We need to eventually find the set $\Sc$ of selected users and the set $\Qc$ of their precoding vectors in $\boldsymbol{\Cc}$. 

Initializations: $\Sc_0^{(t)} =\{k: t_k=t\}$ and $\Sc_{s,n}^{(t)}=\emptyset$, $\Qc_{s,n}^{(t)}=\emptyset$, $\mu_{s,n}^{(t)}=0$ for all $(t,s,n)$.

Step 1) For each $(t,s,n)$, where $t=1,\cdots,T$, $s=1,\cdots,M$, and $n=1,\cdots,C^{M}_{s}$, calculate the sum rate $\mu_{s,n}^{(t)}$ as the following loop with $i=1$: 
\begin{align} \label{eq:NF-6}
   q &= \Mc_{s,n}^{(t)}(i) \\
   k^* &= \argmax_{k\in \Sc_0^{(t)}} \log_2(1+\gamma_k(t,s,n,q)).  
\end{align}
If  $\Sc_{s,n}^{(t)} \cap \{k^*\} = \emptyset$ and $i < s$, then 
\begin{align} \label{eq:NF-7}
   \Sc_{s,n}^{(t)} &= \Sc_{s,n}^{(t)} \cup \{k^*\} \\
   \Qc_{s,n}^{(t)} &= \Qc_{s,n}^{(t)} \cup \{q\} \\
   \mu_{s,n}^{(t)} &= \mu_{s,n}^{(t)} + \log_2(1+\gamma_{k^*}(t,s,n,q)) \\
   i &= i+1 .
\end{align}
Otherwise, finish this loop and continue Step 1) until we get $\Sc_{s,n}^{(t)}, \Qc_{s,n}^{(t)},$ and $\mu_{s,n}^{(t)}$ for all $(t,s,n)$.

Step 2) Select $\Sc$ and $\Qc$ as follows:
\begin{align} 
   (t^*,s^*,n^*) &= \argmax_{t,s,n} \mu_{s,n}^{(t)} \label{eq:NF-8} \\
   \Sc &=  \Sc_{s^*,n^*}^{(t^*)} \\
   \Qc &=  \Qc_{s^*,n^*}^{(t^*)}.
\end{align}

The condition in Step 1) is to prevent allocating multiple beams to a single user and for simplicity here we ignored finding a best alternative user for such a beam. 
Similarly, the fully flexible (optimal) scheduling algorithm can be given with $\gamma_k(s,n,q)$ in (\ref{eq:NF-1}) and taking into account finding a best alternative user for the multiple beams case.


\subsubsection{Dedicated Pilot Overhead Reduction}

Apart from flexible scheduling, a probably more important benefit of the proposed INR feedback is that it can reduce the dedicated pilot overhead. Note that the INR gives us a side information as to mutual interference between users. Depending on the INR feedback, the BS scheduler can figure out how much the intended user $k$ with beam $\pw_k$ is interfered by the other user using beam $\pw_j\in \boldsymbol{\Cc}$ with $j\neq k$. Thus the scheduler can allocate a common resource to multiple dedicated pilots of users with their beams sufficiently low interfering each other.  
In sum, the INR feedback provides not only flexible scheduling but it can also significantly reduce the overhead for dedicated pilot. 

\subsubsection{Arbitrary Power Allocation} 

Although we have assumed the equal power allocation to calculate SINRs in (\ref{eq:NF-1}) and (\ref{eq:NF-2}) for the proposed framework, in fact the INR feedback allows unequal power allocation of the total power $P$ over any subset of users. Although the resulting power optimization problem is non-convex, an appropriate sub-optimal power allocation algorithm (e.g., water-filling) will bring a noticeable performance improvement especially when users have different SNRs. 
Furthermore, we may jointly optimize user selection and power allocation. However, developing power allocation algorithms is for further study.

\subsection{Feedback Overhead of INR}

The feedback overhead of the full INR feedback is $MT$ INRs (positive real values) per user, while that of the partial feedback is $M$ INRs and $\lceil\log_2 T\rceil$-bit subset index $t_k$ per user. In contrast, DFT-SINR requires only a single SINR and $\lceil\log_2 MT\rceil$-bit beam index $\ell_k$ per user. However, the latter shows non-negligible performance degradation due to lack of scheduling flexibility (i.e., multiuser diversity gain reduction) as $M$ increases and hence the ratio $K/M$ (the number of users per beam) decreases for $K$ fixed. 
Notice that the quantized feedback of SINR generally requires only a small number of bits per SINR in practice (e.g., 2 or 4 bits in \cite{LTE12}), which is also the case with INR. For instance, the partial INR feedback may consume only $(2M+1)$ bits per user for INR and $t_k$, lending itself to large-scale MIMO unless $M$ is too large.

\section{Throughput of INR-Based Schemes}
\label{sec:TA}


In this section, we show how large a flexible scheduling gain owing to the INR feedback is in the i.i.d. Rayleigh fading channel. We focus on the $M\sim K$ case and for simplicity let $T=1$  so that the full and partial feedback schemes become equivalent. 
It is well known that the achievable throughput of RBF is \cite{Sha05}
\begin{align} \label{eq:TA-1}
   \mathcal{R}_\text{RBF} &= \mathbb{E} \bigg[\sum_{m =1}^M \log\Big(1+\max_{1\le k\le K}\gamma_{k}(m)\Big)\bigg] +o(1) \nonumber\\ 
   &\approx M\mathbb{E} \bigg[ \log\Big(1+\max_{1\le k\le K}\gamma_{k}(m)\Big)\bigg] 
\end{align}
where $o(1)$ takes into account the fact that user $k$ may be the strongest user of two or more beams with a vanishing probability as $K$ goes to infinity. The equality follows from the fact that $\gamma_{k}(m)$ is i.i.d. over both $k$ and $m$ since the beam vectors of RBF forms a unitary matrix. For sufficiently large $K$, we have
\begin{align} \label{eq:TA-1b}
   \mathcal{R}_\text{RBF} &= M\log\log K +M\log \frac{P}{M} +o(1). 
\end{align}

Assuming that $\pc_1, \cdots, \pc_M$ are random orthonormal vectors like RBF and that all users have the same SNR with $\sigma_k^2=1, \forall k$, the SINR in (\ref{eq:NF-3}) can be rewritten as
\begin{align} \label{eq:TA-3}
   \gamma_k(s,n,q) = \frac{v}{\frac{s}{P}+y} , \ q \in \Mc_{s,n}
\end{align}
where $v=|\ph_k^\ct\pc_q|^2$ and $y=\sum_{j\in \Mc_{s,n}\setminus q}|\ph_k^\ct\pc_j|^2$. Since $\{\pc_q: q \in \Mc_{s,n}\}$ are orthonormal, $v$ is i.i.d. over $k$ and $q$ with $\chi^2(2)$ distribution and $y$ is $\chi^2(2s-2)$ distributed, thus resulting in that $\gamma_k(s,n,q)$ is i.i.d. over $k$. The distribution of such $\gamma_k(s,n,q)$ is well known to satisfy the sufficient condition in \cite{Uzg54} (also \cite[Corollary A.1]{Sha05} for extreme value theory.

Given $s$ and $q \in [1:M]$, it is easy to see that 
the number of occurrences of $q$ in $\Mc_{s}$ is $C^{M-1}_{s-1}$ due to the flexible scheduling over the $s$-combinations of $[1:M]$. Denote by $\Mc_{s}(q)$ such a subset of $\Mc_{s}$ for $q$.  As a consequence, $\gamma_{k}(s,n',q)$ is i.i.d. over $n'\in \Mc_{s}(q)$ as well by the same argument as above.

The flexible scheduling is to select the maximum sum rate $\mu_{s,n}$ over $s$ and $n$ in (\ref{eq:NF-8}), implying that the maximum $\mu_{s,n}$ does not necessarily correspond to the sum of the maxima of $\gamma_{k}(s,n',q)$'s over $k$ and $n'$ for $q \in \Mc_{s,n}$. 
Nevertheless, we assume that the scheduler selects the maximum $\gamma_{k}(s,n',q)$ for simplicity. Otherwise, it is very difficult to analyze the throughput of the INR-based scheme by using extreme value theory since the standard approach requires the calculation of the distribution of $\sum_{q \in \Mc_{s,n}} \log\big(1+\max_{k}\gamma_{k}(s,n,q)\big)$, which is i.i.d. over $n$ for $s$ fixed but $\log\big(1+\max_{k}\gamma_{k}(s,n,q)\big)$'s are not independent over $q$. Suppose that the maximum $\mu_{s^*,n^*}$  corresponds to the sum of the $i_q$th largest $\gamma_{k}(s^*,n',q)$'s over $k$ and $n'$ for each $q \in \Mc_{s^*,n^*}$. Letting $$\underline{i} = \max_q i_q$$ we notice that the asymptotic behaviors of the largest $\gamma_{k}(s^*,n',q)$ and the $\underline{i}$th largest one would be quite similar for all $q$ since $K\times C^{M-1}_{s-1}$ is sufficiently large relative to $\underline{i}$ with high probability, i.e., the former quantity grows much faster than the latter as $M$ increases.\footnote{Recall that we consider the $M\sim K$ case with $M$ not small. Accordingly, $s^*$ near either $M$ or $1$ will almost surely not happen.}
Using this approximation, for large $K\times C^{M-1}_{s-1}$, we can write the throughput of INR as
\begin{align} \label{eq:TA-2}
   \mathcal{R}_\text{INR} &= \mathbb{E} \Big[\max_{s,n} \; \mu_{s,n} \Big] +o(1)\nonumber\\ 
   &\approx \mathbb{E} \bigg[\max_{s,n} \sum_{q \in \Mc_{s,n}} \log\Big(1+\max_{k}\gamma_{k}(s,n,q)\Big)\bigg] \nonumber\\ 
   &\approx \mathbb{E} \bigg[ \max_{s}\; s \log\Big(1+\max_{k,n'}\gamma_{k}(s,n',q)\Big)\bigg] \nonumber\\ 
   &= \max_{s}\mathbb{E} \bigg[ s \log\Big(1+\max_{k,n'}\gamma_{k}(s,n',q)\Big)\bigg] 
\end{align}
where $s=1,\cdots,M$, $n'=1,\cdots, C^{M-1}_{s-1}$, and $k=1,\cdots,K$.  The first approximation is analogous to (\ref{eq:TA-1}) and the second approximation follows from the equivalence of the asymptotic behaviors of the largest $\gamma_{k}(s,n',q)$ and the $\underline{i}$th largest one for given $s$ and $q$.
The last equality is due to the fact that $\gamma_{k}(s,n',q)$ is i.i.d. over $n'$ as well as $k$. Consequently, the flexible scheduling in the i.i.d. Rayleigh fading channel is to select the maximum of $K\times C^{M-1}_{s-1} $ i.i.d. random variables $\gamma_{k}(s,n',q)$ over both $k$ and $n'$ for each $s$ and $q$. Then, it chooses the optimal $s$ that maximizes (\ref{eq:TA-2}). Eventually, we obtain a simple result on the sum-rate scaling of INR as follows. 

\begin{thm}\label{thm-1}
 For $M$ finite, the achievable throughput of the INR scheme behaves in the i.i.d. Rayleigh fading channel as
\begin{align} \label{eq:TA-4}
   \mathcal{R}_\text{\emph{INR}} = \max_{1\le s\le M} s\log\log \big(K\times C^{M-1}_{s-1}\big) +s\log \frac{P}{s} +o(1).
\end{align}
\end{thm}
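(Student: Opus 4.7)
The plan is to start from the approximation already derived just before the theorem statement, namely
\[
   \mathcal{R}_\text{INR} \;\approx\; \max_{1\le s\le M}\;\mathbb{E}\!\left[s\log\!\left(1+\max_{k,n'}\gamma_k(s,n',q)\right)\right],
\]
and to evaluate the inner expectation asymptotically via extreme value theory. Since $M$ is finite, I can treat each $s\in\{1,\dots,M\}$ separately and then compare the $M$ resulting expressions, which is the only reason the outer maximum survives into the final statement.

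First I would fix $s$ and a precoding vector $q\in\Mc_{s,n}$, and recall from the discussion around (\ref{eq:TA-3}) that $\gamma_k(s,n',q) = v/(s/P+y)$ with $v\sim\chi^2(2)$ independent of $y\sim\chi^2(2s-2)$, i.i.d.\ across $k\in[1\!:\!K]$ and $n'\in\Mc_s(q)$, with $|\Mc_s(q)|=C^{M-1}_{s-1}$. Hence the inner maximum is over $N_s := K\cdot C^{M-1}_{s-1}$ i.i.d.\ copies. Next I would write down the tail of $\gamma_k(s,n',q)$ explicitly; conditioning on $y$ gives $\Pr(\gamma>x)=\mathbb{E}[e^{-(s/P)x - xy}] = e^{-sx/P}(1+x)^{-(s-1)}$, which is exactly the tail used in the RBF analysis of \cite{Sha05} with $\sigma^2$ replaced by $s/P$ and $M$ by $s$. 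This distribution belongs to the Gumbel domain of attraction and satisfies the conditions of Uzgören \cite{Uzg54} already cited in the text.

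The key extreme-value computation is then standard: for the Gumbel-domain tail above, the maximum of $N_s$ i.i.d.\ copies concentrates around the solution of $\Pr(\gamma>x)\sim 1/N_s$, which to leading order is $x^* = (P/s)\log N_s$. A direct mimic of \cite[Lemma 5]{Sha05} yields
\[
   \mathbb{E}\!\left[\log\!\left(1+\max_{k,n'}\gamma_k(s,n',q)\right)\right] = \log\log N_s + \log\frac{P}{s} + o(1),
\]
where the $o(1)$ absorbs the sub-leading $\log(1+x)^{-(s-1)}$ correction as well as the standard concentration-of-maxima error from the Gumbel limit (these vanish because $N_s=K\cdot C^{M-1}_{s-1}\to\infty$ with $K$ while $s\le M$ is fixed). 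Multiplying by $s$ and substituting $N_s=K\cdot C^{M-1}_{s-1}$ gives
\[
   \mathbb{E}\!\left[s\log\!\left(1+\max_{k,n'}\gamma_k(s,n',q)\right)\right] = s\log\log\!\big(K\cdot C^{M-1}_{s-1}\big) + s\log\frac{P}{s} + o(1),
\]
and finally taking the outer maximum over $s$ yields (\ref{eq:TA-4}). Because $M$ is finite, the maximum is over a fixed finite set of $s$, so the $o(1)$ error passes through the outer $\max$ without trouble.

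The main obstacle is the rigorous justification of the extreme-value step, i.e.\ upgrading the informal ``maximum $\approx (P/s)\log N_s$'' into the stated in-expectation convergence $\log\log N_s + \log(P/s) + o(1)$, and doing so with enough uniformity in $s$ to exchange the limit with the outer $\max_s$. The cleanest route is to reuse the proof pattern of \cite[Theorem 1 / Corollary A.1]{Sha05}, which already handles exactly the same tail shape (one only needs to replace $(M,\sigma^2,K)$ there by $(s,s/P,N_s)$ here). A secondary, conceptual subtlety I would flag but not re-prove is that (\ref{eq:TA-2}) itself relied on approximating the true sum $\sum_{q}\log(1+\max_k\gamma_k(s,n,q))$ (whose summands are dependent in $q$) by $s$ times a single marginal maximum; this was already argued via the $\underline{i}$-th-order-statistic comparison preceding the theorem, so for the proof of the theorem as stated I would simply invoke that reduction and proceed as above.
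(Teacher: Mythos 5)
Your proposal is correct and follows essentially the same route as the paper's own proof: both fix $s$, exploit that $\gamma_k(s,n',q)$ is i.i.d.\ over the $K\times C^{M-1}_{s-1}$ pairs $(k,n')$, invoke extreme value theory in the style of \cite{Sha05} to conclude that the maximum behaves like $\frac{P}{s}\log\big(K\times C^{M-1}_{s-1}\big)$, and then take the maximum over the finitely many values of $s$, relying on the pre-theorem $\underline{i}$th-order-statistic reduction for the dependence in $q$. You merely make explicit the tail computation $\Pr(\gamma>x)=e^{-sx/P}(1+x)^{-(s-1)}$ and the Gumbel-domain step that the paper leaves implicit.
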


\begin{proof}
The proof of (\ref{eq:TA-4}) follows the same line of \cite{Sha05}. For a given $s$,  using the fact that $\gamma_{k}(s,n',q)$ is i.i.d. over both $k$ and $n'$, we recall that the $\underline{i}$th largest $\gamma_{k}(s,n',q)$ behaves like the largest one for $K\times C^{M-1}_{s-1}$ sufficiently large, where $\underline{i} \ll K\times C^{M-1}_{s-1}$. Then, applying extreme value theory \cite{Lea88}, it turns out that the maximum of $\gamma_{k}(s,n',q)$ over $(k, n')$ behaves like $\frac{P}{s}\log (K\times C^{M-1}_{s-1})$, thereby yielding (\ref{eq:TA-4}). A notable difference from the proof of RBF lies in that $K$ need not be necessarily large to invoke extreme value theory, unless $M$ is small. 
\end{proof}


The above result shows that, letting $S = s^*$, the flexible scheduling algorithm due to the INR feedback provides the multiuser diversity gain of $S\log\log \big(K\times C^{M-1}_{s-1}\big)$ and SNR gain of $\frac{M}{S}$ (i.e., SNR is $\frac{P}{S}$ instead of $\frac{P}{M}$) at the cost of multiplexing gain reduction by a factor of $\frac{S}{M}$, compared to (\ref{eq:TA-1b}) of RBF that requires sufficiently large $K$.
Note that this result holds true as long as $K$ is too small, as mentioned earlier. Therefore, Theorem \ref{thm-1} shows that flexible scheduling is of use for our goal that improves the conventional CB-based approach like RBF in the $M\ge K$ regime with $M$ large. Also, it is easy to see that the flexible scheduling gain vanishes as $K$ goes to infinity with $M$ finite, i.e., $\lim_{K\rightarrow \infty}\mathcal{R}_\text{INR} = \mathcal{R}_\text{RBF}$ with $S=M.$

\begin{figure} 
\hspace{-4mm} \includegraphics[scale=.62]{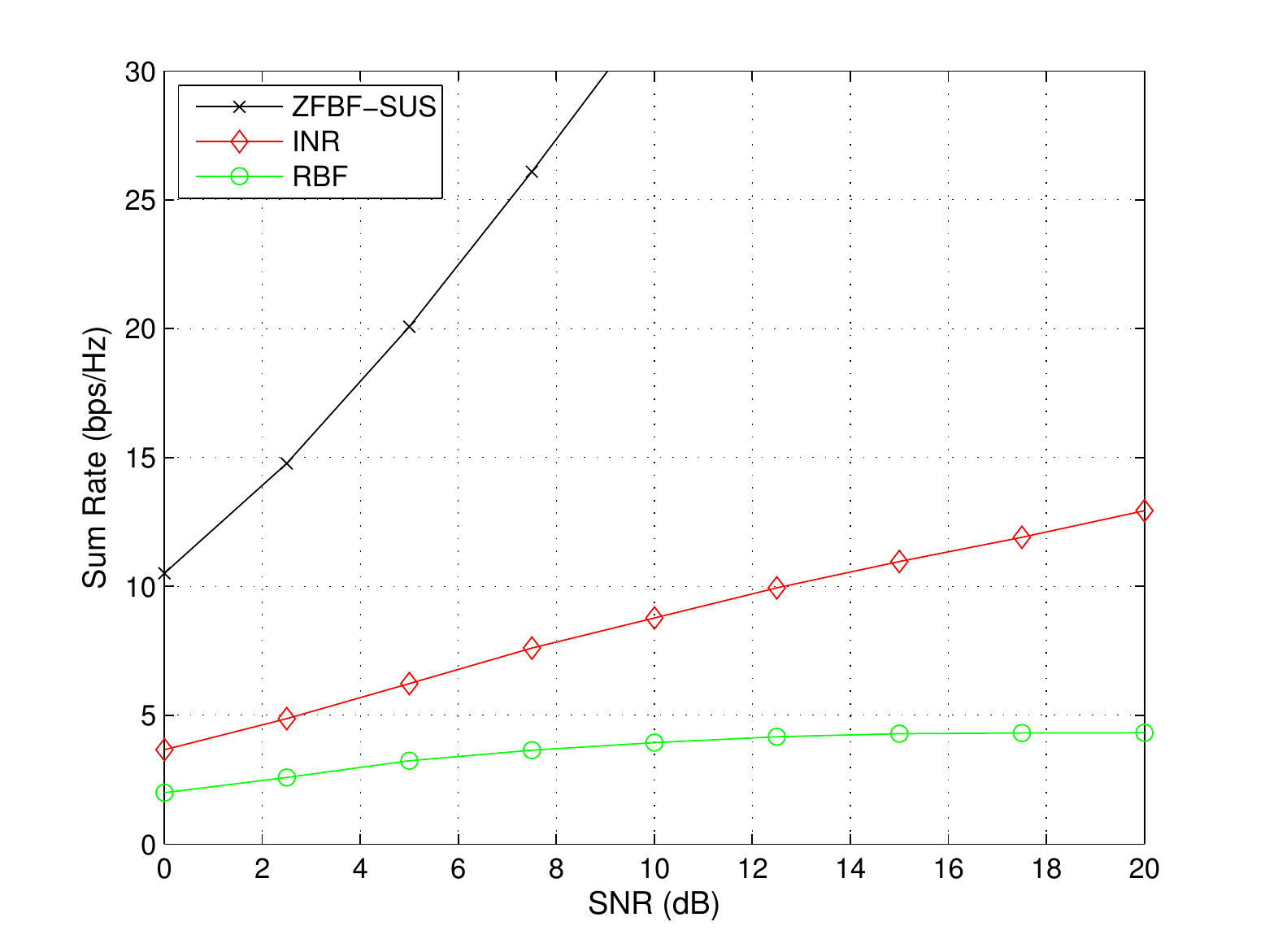} 
\caption{Achievable sum rate curves of MU-MIMO downlink schemes versus SNR in the independent fading case for $M=16$ and $K=20$.} \label{fig-1}
\end{figure} 

Fig. \ref{fig-1} shows the achievable throughputs of ZFBF-SUS \cite{Yoo06}, RBF, and INR schemes in the i.i.d. Rayleigh fading channel for $M=16$ and $K=20$. We can see that the rate gap between RBF and INR is quite large, advocating the flexible scheduling gain is remarkable. Even if INR outperforms RBF, it suffers severe performance degradation compared to ZFBF in the \emph{independent fading} case.

Extending Theorem \ref{thm-1}, where we assumed $T=1$, into the multiple $T$ case, we have the following corollary.

\begin{corol}\label{cor-1}
 For $M$ finite, the achievable throughput of the full INR feedback scheme behaves in the i.i.d. Rayleigh fading channel as
\begin{align} \label{eq:TA-6}
   \mathcal{R}_\text{\emph{full INR}} = \max_{1\le s\le M} s\log\log \Big(KT\times C^{M-1}_{s-1}\Big) +s\log \frac{P}{s} +o(1).
\end{align}
Assuming the same number of users for each subset $\Cc_t$ in (\ref{eq:Pre-3b}), the achievable throughput of the partial INR feedback scheme behaves like
\begin{align} \label{eq:TA-7}
   \mathcal{R}_\text{\emph{partial INR}} = \max_{1\le s\le M} s\log\log \Big(\frac{K}{T}\times C^{M-1}_{s-1}\Big) +s\log \frac{P}{s} +o(1).
\end{align}
\end{corol}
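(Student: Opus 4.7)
The plan is to reuse the entire proof of Theorem~\ref{thm-1} verbatim and only rescale the size of the i.i.d.\ (or approximately i.i.d.) pool of SINRs over which the scheduler takes its maximum, then let the same extreme value calculation from \cite{Lea88} run through.

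For the full INR scaling (\ref{eq:TA-6}), I would first argue that, asymptotically in the $M\sim K$ regime, the sum-rate-maximizing $s$-tuple of scheduled beams may be restricted without loss to lie inside a single unitary subset $\pC_t$: any $s$-tuple straddling two distinct subsets $\pC_t,\pC_{t'}$ has non-orthogonal beams and thus an interference floor that the large-$K$ scheduler will avoid. Restricting to such orthonormal $s$-tuples, the SINRs $\gamma_k(t,s,n,q)$ in (\ref{eq:NF-2}) are, for each fixed $q$ and $s$, distributed exactly as the $\gamma_k(s,n,q)$ in (\ref{eq:TA-3}) of Theorem~\ref{thm-1} and are i.i.d.\ across users $k$. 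Holding one beam slot $q$ fixed, the number of admissible $(t,n)$-pairs for which $q\in\Mc_{s,n}^{(t)}$ is $T\cdot C^{M-1}_{s-1}$ (the extra factor $T$ being the free choice of which subset we place $q$ in, which the full INR feedback now enables). Hence the effective pool over which the per-slot maximum is taken has size $K\cdot T\cdot C^{M-1}_{s-1}$. Copying the extreme-value step of Theorem~\ref{thm-1}, the maximum behaves like $\frac{P}{s}\log(KT\cdot C^{M-1}_{s-1})$; multiplying by $s$ occupied beam slots and optimizing over $s$ gives (\ref{eq:TA-6}).

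For the partial INR scaling (\ref{eq:TA-7}), the uniformity assumption assigns $K/T$ users to each subset $\pC_t$, and crucially the INR reports for different $t$'s come from \emph{disjoint} user channels, so the per-subset scheduling outcomes are genuinely independent across $t$. For each fixed $t$, Theorem~\ref{thm-1} applied to the $K/T$ users assigned to that subset gives an achievable sum rate of $\max_{s} s\log\log(\frac{K}{T}\cdot C^{M-1}_{s-1}) + s\log\frac{P}{s} + o(1)$. The outer maximization over $t\in[1:T]$ is a maximum of $T$ independent random variables already concentrated tightly around this deterministic value, and a standard extreme value calculation shows that the gain from this outer maximization is $O\bigl(\log T/\log(K/T)\bigr)=o(1)$ for fixed $T$ as $K\to\infty$. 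This yields (\ref{eq:TA-7}).

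The main obstacle lies in the full INR argument: I am claiming that the samples $\gamma_k(t,s,n,q)$ for different $t$ but the same $k$ can be treated as i.i.d.\ for the purposes of the extreme value limit, even though they all depend on the shared Gaussian channel $\ph_k$. The clean justification is via Leadbetter's mixing condition $D(u_n)$ for the stationary sequence of Gaussian quadratic forms $|\ph_k^\ct\pc_\ell|^2$ along the DFT grid $\ell\in[1:MT]$: the pairwise covariances decay fast enough with the DFT spacing that the limit law of the maximum is the same as in the i.i.d.\ case, which is precisely the property that allowed Theorem~\ref{thm-1} to invoke \cite{Lea88} in the first place. Once this is in hand, the bookkeeping above is routine and Corollary~\ref{cor-1} follows.
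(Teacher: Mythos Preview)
The paper does not actually prove Corollary~\ref{cor-1}; it is asserted as the immediate extension of Theorem~\ref{thm-1} to $T>1$ with no argument beyond that sentence. Your proposal---rescale the size of the i.i.d.\ pool over which the per-slot maximum is taken and re-run the extreme-value step---is precisely the intended reading, and for the partial case your reduction to $T$ disjoint instances of Theorem~\ref{thm-1} with $K/T$ users each is exactly right.

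Two comments on the full-INR part. First, the phrase ``the free choice of which subset we place $q$ in'' is misleading: a concrete beam $\pc_\ell$ sits in exactly one $\pC_t$. What you actually need (and what the notation in (\ref{eq:NF-2}) supports) is that $q\in[1:M]$ is a \emph{position} index within a unitary block, so that the extra maximization over $t\in[1:T]$ supplies the factor $T$. Second, your claim that mixed-subset $s$-tuples may be discarded because of an ``interference floor'' is a heuristic, not a proof: non-orthogonal DFT columns in i.i.d.\ Rayleigh fading still give exponential-type marginals and do not obviously cap the SINR. The honest statement is that restricting to a single $\pC_t$ is what makes the $\chi^2$ calculus of Theorem~\ref{thm-1} go through, and (\ref{eq:TA-6}) is computed under that restriction; the paper is silent on whether unrestricted scheduling does strictly better.

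Your flagging of the across-$t$ dependence and the proposal to dispatch it via Leadbetter's $D(u_n)$ condition from \cite{Lea88} goes beyond anything in the paper, which simply treats the $T$ per-subset maxima as independent so that a maximum over $T$ copies of $K\,C^{M-1}_{s-1}$ samples collapses to a maximum over $KT\,C^{M-1}_{s-1}$ samples. That is the step the paper takes for granted, and you are supplying the missing justification.
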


Comparing (\ref{eq:TA-6}) with (\ref{eq:TA-7}), we can see that the full INR feedback gives only a vanishing gain over the partial INR as $K$ increases for $T$ fixed.



\section{Feedback Overhead Reduction}

One may argue that even the $2M$-bit INR feedback per user is still burdensome for large-scale MIMO, where both $M$ and $K$ are large. 
We present here a simple and heuristic approach to reduce the cost of INR feedback, referred to as \emph{one-bit INR feedback}.

For each user $k$, first select $(t_k,m_k)$ such that 
\begin{align} \label{eq:OR-1}
   (t_k,m_k) = \argmax_{1\le t\le T, 1\le m\le M} \frac{\big|\ph_{k}^\ct\pc_m^{(t)}\big|^2}{\sigma_{k}^2}.
\end{align}
We consider the resulting ${|\ph_{k}^\ct\pc_{m_k}^{(t_k)}|^2}/{\sigma_{k}^2}$ as the SNR (denoted by $\snr_k$) of user $k$ with respect to the selected beam $(t_k,m_k)$ and hence the partial INR feedback is assumed here. 
Then, we impose a threshold $\gamma$ on $\inr_{k,j}^{(t_k)}, j\neq m_k$ in (\ref{eq:NF-4}). If 
$$\frac{\inr_{k,j}^{(t_k)}}{\snr_k}\ge \gamma$$
then one bit is assigned to beam $j$ by `1' and otherwise by `0'. By setting $\gamma$ sufficiently small (e.g., $\gamma = 0.01$ ($-20$ dB)), this feedback information allows the BS to just know whether the interference between any selected users is negligible or not, instead of the accurate INR. Since at high SNR the receiver performance is more sensitive to the channel estimation error, $\gamma$ slightly depends on $\snr_k$. Therefore, $\gamma$ may be a function of $\snr_k$ rather than a fixed quantity.

Clearly, the one-bit INR feedback is to impose a restriction on the scheduler by reducing the number of hypotheses. At the sacrifice of an inevitable performance loss, this technique provides feedback overhead reduction by a factor of $1/2$ as it consumes only one SNR (e.g., $4$ bits) and $(M-1)$-bit INRs per user. The performance loss will be evaluated later by simulation results.

[OPEN ISSUES]

1) We definitely need a low complexity scheduling algorithm for the one-bit INR feedback scheme, which may be  computationally much more efficient than others for large-scale MIMO. 

2)
Note that `1' would be sparse when the AS of users are large. 
On the other hand, `0' would become more sparse as the number $M$ of antennas and/or the center frequency $f_c$ of the system increases. To exploit these aspects and hence further reduce the feedback overhead, we need to derive or at least employ a compressive sensing technique for binary sparse signals.

\section{Numerical Results}

We evaluate achievable throughputs of several schemes of interest in correlated fading MIMO broadcast channels, for which we use the following one-ring channel model.

\subsection{One-Ring Channel model}

In the typical cellular
downlink case, the BS is elevated and free of local
scatterers, and the users are placed at ground level
and are surrounded by local scatterers. This channel scenario corresponds to the one-ring model \cite{Shi00}, for which a user located at azimuth angle $\theta$ and distance ${\sf s}$ is surrounded by a ring of scatterers of radius  ${\sf r}$ such that AS $\Delta \approx \arctan({\sf r}/{\sf s})$. Assuming the uniform linear array (ULA) with a uniform distribution of the received power from planar waves impinging on the BS array, we have that the correlation coefficient
between BS antennas $1 \leq p, q \leq M$ is given by \cite{Adh13}
\begin{align} \label{eq:SM-4} 
   [\pR]_{p,q} =\frac{1}{2\Delta}  \int_{-\Delta}^{\Delta}e^{j2\pi D(p-q)\sin(\alpha+\theta)}d\alpha
\end{align}
where $D$ is the normalized distance between antenna elements by the wavelength. 

\subsection{System Throughput}

We will present here the achievable throughput of MU-MIMO downlink systems discussed in this paper. Consider the BS has $M=4,8,16$ antennas with ULA of antenna spacing $D=1/2$, and users are uniformly distributed over the range $[-60^o, 60^o]$ with $\Delta_k$ uniformly distributed in the range $[5^o,20^o]$, where $\Delta_k$ is angular spread (AS) of user $i$. The transmit correlation matrices are generated by the one-ring channel model (\ref{eq:SM-4}).

\subsubsection{Flexible vs. Limited Scheduling} 

\begin{figure} 
\hspace{-4mm} \includegraphics[scale=.62]{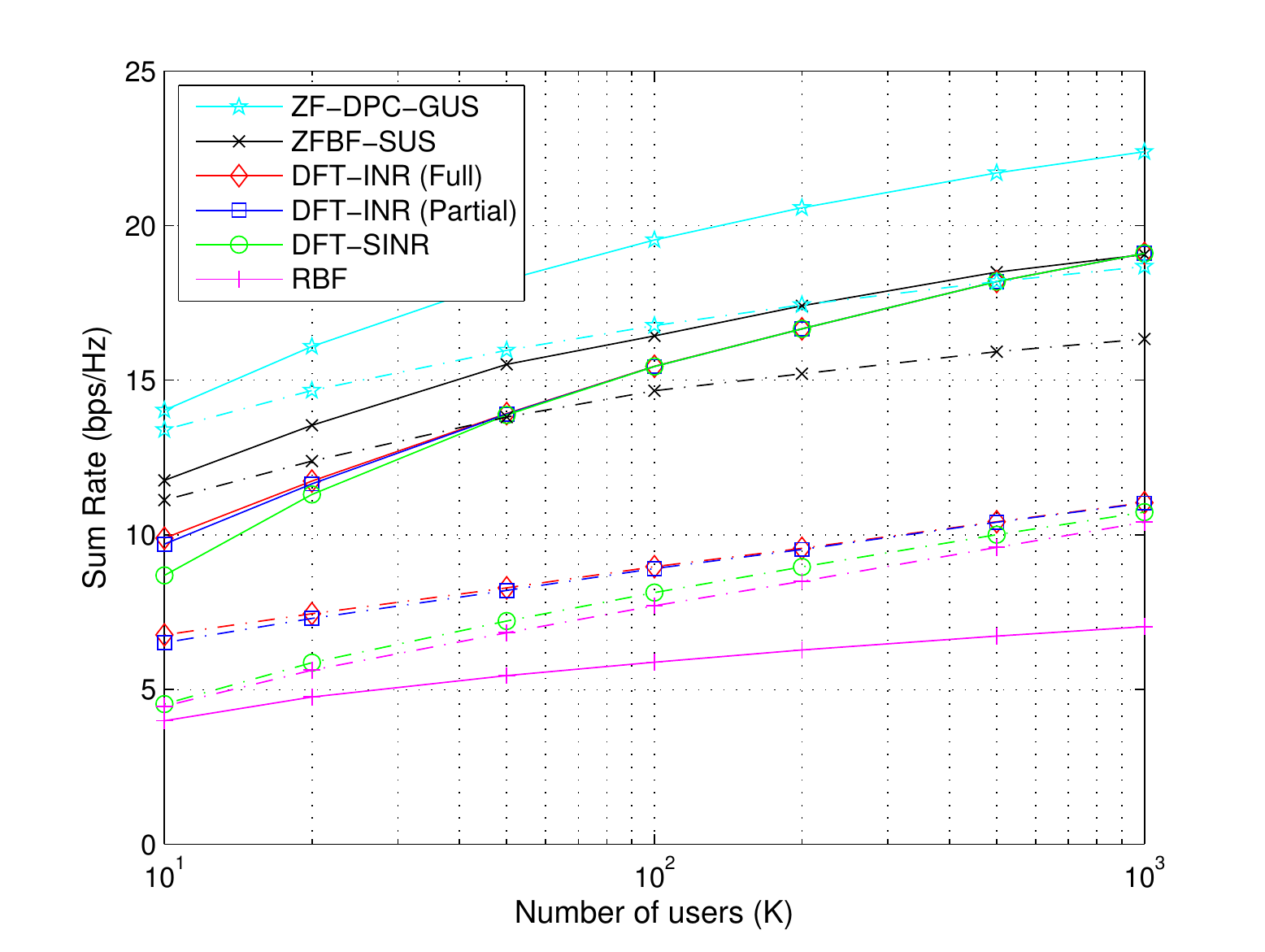} 
\caption{Achievable sum rate curves of various MU-MIMO downlink schemes versus $K$ for $M=4$,  SNR $=10$ dB, and $\Delta_k\in[5^o,20^o]$. While the solid lines indicate spatially correlated Rayleigh fading channels, the dash-dot lines denote the i.i.d. Rayleigh fading channel.} \label{fig-2}
\end{figure} 

For $M=4$ (i.e., small $M$), Fig. \ref{fig-2} compares the sum rates of ZF-DPC with greedy user selection \cite{Dim05} (ZF-DPC-GUS), which approximates the sum capacity, ZFBF-SUS, DFT-based precoding with  full INR feedback (DFT-INR-Full), DFT-based precoding with partial INR feedback (DFT-INR-Partial), DFT-based precoding with SINR feedback (DFT-SINR), random beamforming (RBF) \cite{Sha05}. While ZF-DPC-GUS and ZFBF-SUS assume perfect CSIT, the others employ the DFT-based CB (\ref{eq:Pre-3}). 
Throughout this section, the number $T$ of subsets in the CB was set to $2$, which is appropriate as a large CB size usually undermines multiuser diversity gain. The beneficial impact of transmit correlation on the capacity of MU-MIMO downlink channels was addressed in \cite{Nam14a}. We show that spatially correlated fading is more beneficial to the CB-based schemes of interest than ZFBF with full CSI. 
Fig. \ref{fig-3} and Fig. \ref{fig-4} present the sum-rate comparison for $M=8,16$ (slightly large $M$),  respectively. 



\begin{figure} 
\hspace{-4mm} \includegraphics[scale=.62]{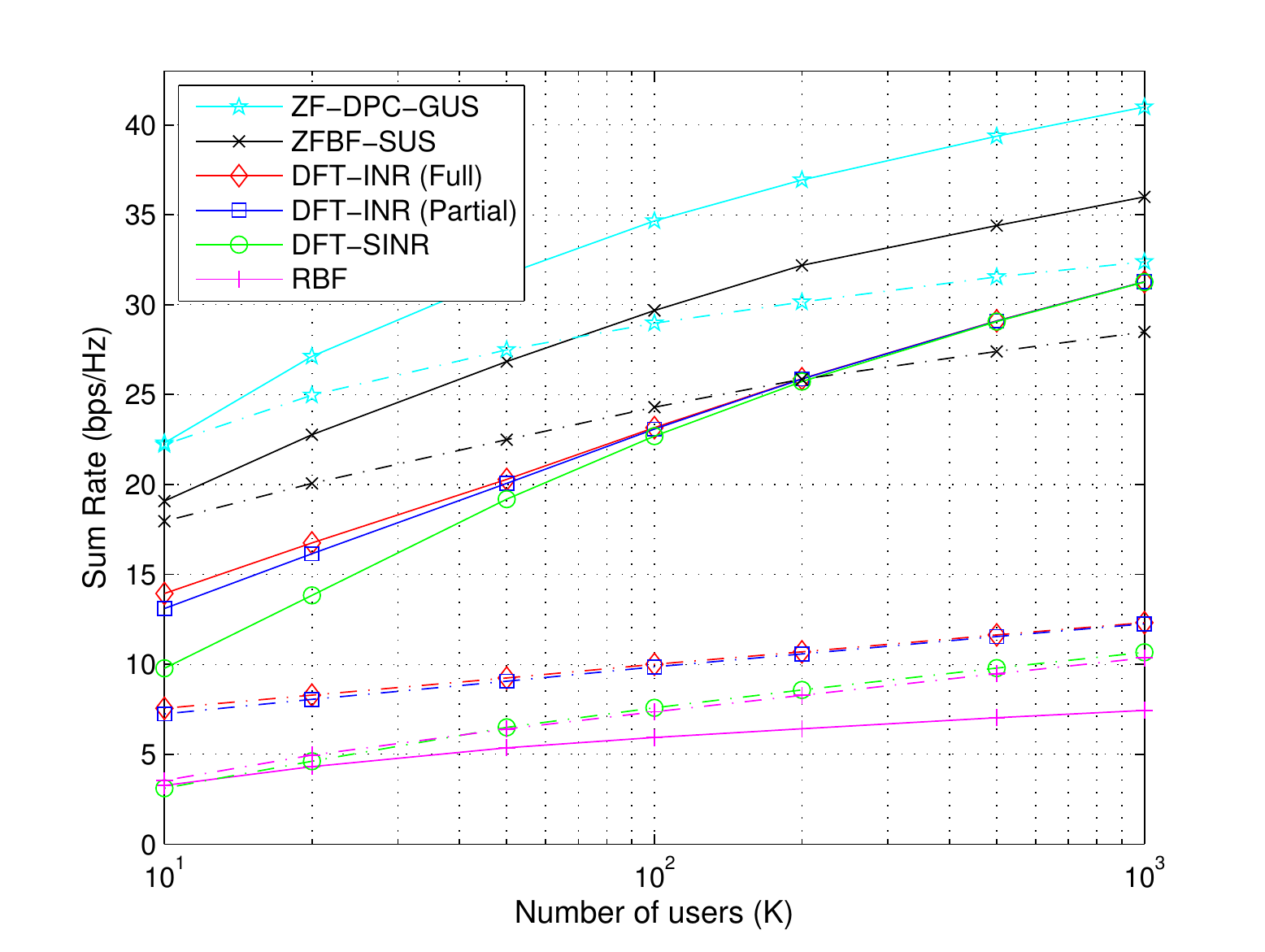} 
\caption{Achievable sum rate curves of various MU-MIMO schemes versus $K$ for $M=8$,  SNR $=10$ dB, and $\Delta_k\in[5^o,20^o]$. While the solid lines indicate spatially correlated Rayleigh fading channels, the dash-dot lines denote the i.i.d. Rayleigh fading channel.} \label{fig-3}
\end{figure} 

\begin{figure} 
\hspace{-4mm} \includegraphics[scale=.62]{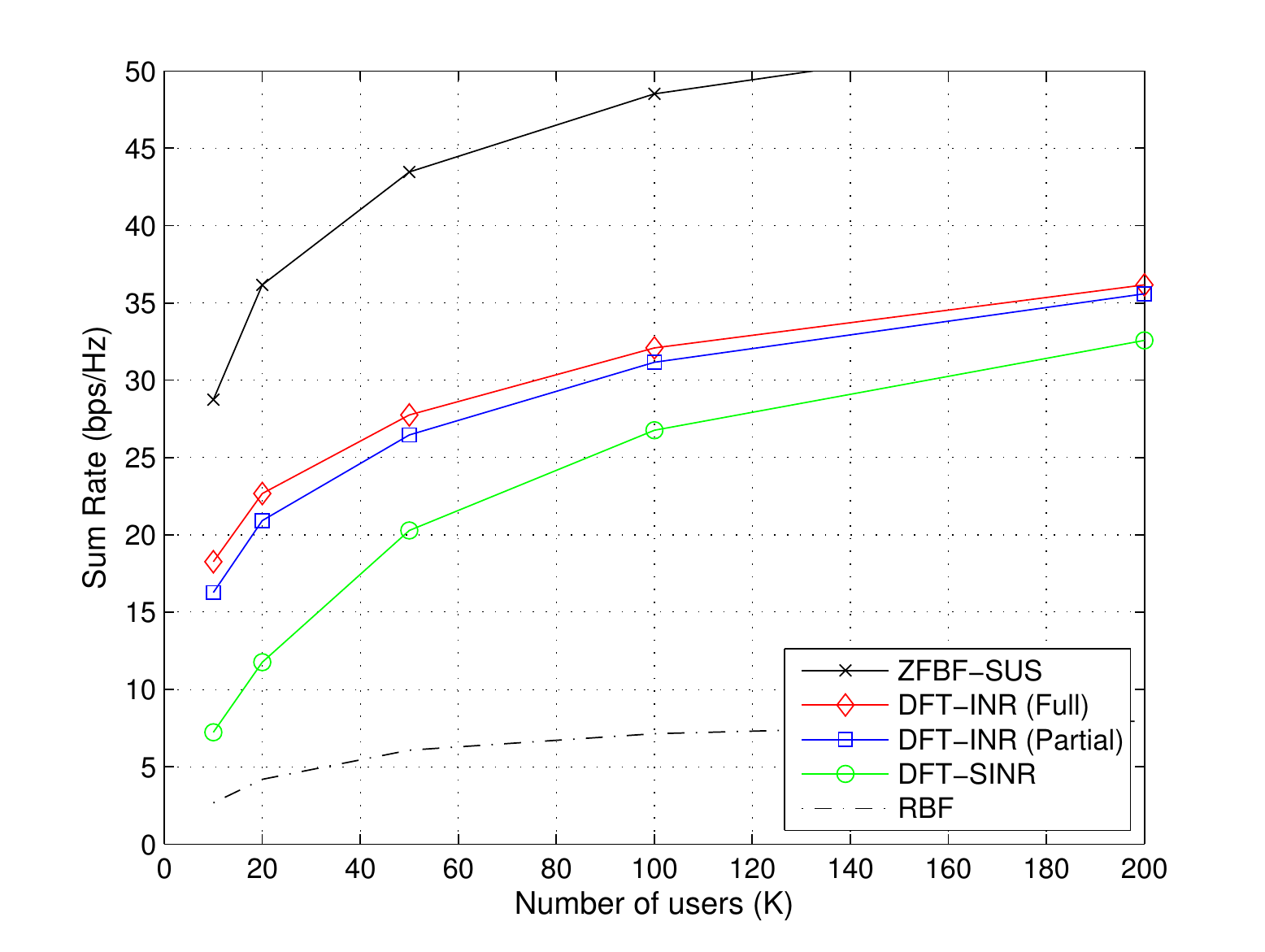} 
\caption{Achievable sum rate curves of various MU-MIMO schemes versus $K$ for $M=16$,  SNR $=10$ dB, and $\Delta_k\in[5^o,20^o]$.} \label{fig-4}
\end{figure} 

From Figs. \ref{fig-2}--\ref{fig-4}, we observe that the proposed framework based on INR feedback improves the system throughput due to flexible scheduling, compared to the conventional framework based on CDI and SINR feedback (i.e, DFT-SINR), especially when $K$ is small. 
Therefore, the flexible scheduling gain of the new framework is readily expected to become even much larger as $M$ increases at the cost of a moderate increase in CSI feedback overhead.
It is also shown that DFT-SINR approaches DFT-INR for $K$ sufficiently large. This is because the optimal scheduling selects all $M$ orthogonal beams in a subset $t$ with high probability as $K\rightarrow \infty$, which corresponds to the SINR feedback case.

Finally, we observe that the DFT-based schemes significantly outperform RBF in correlated fading channels, the latter of which is known to be quite inferior to ZFBF with limited feedback (e.g., \cite{Rav08}). The reason why RBF shows severe degradation in the correlated fading channel is that it uses isotropic orthonormal beams even though the user channels have long-term channel directions, i.e., some of beams transmitted by RBF are outside the channel directions of users.



\subsubsection{Impact of Transmit Correlation}

\begin{figure} 
\hspace{-4mm} \includegraphics[scale=.62]{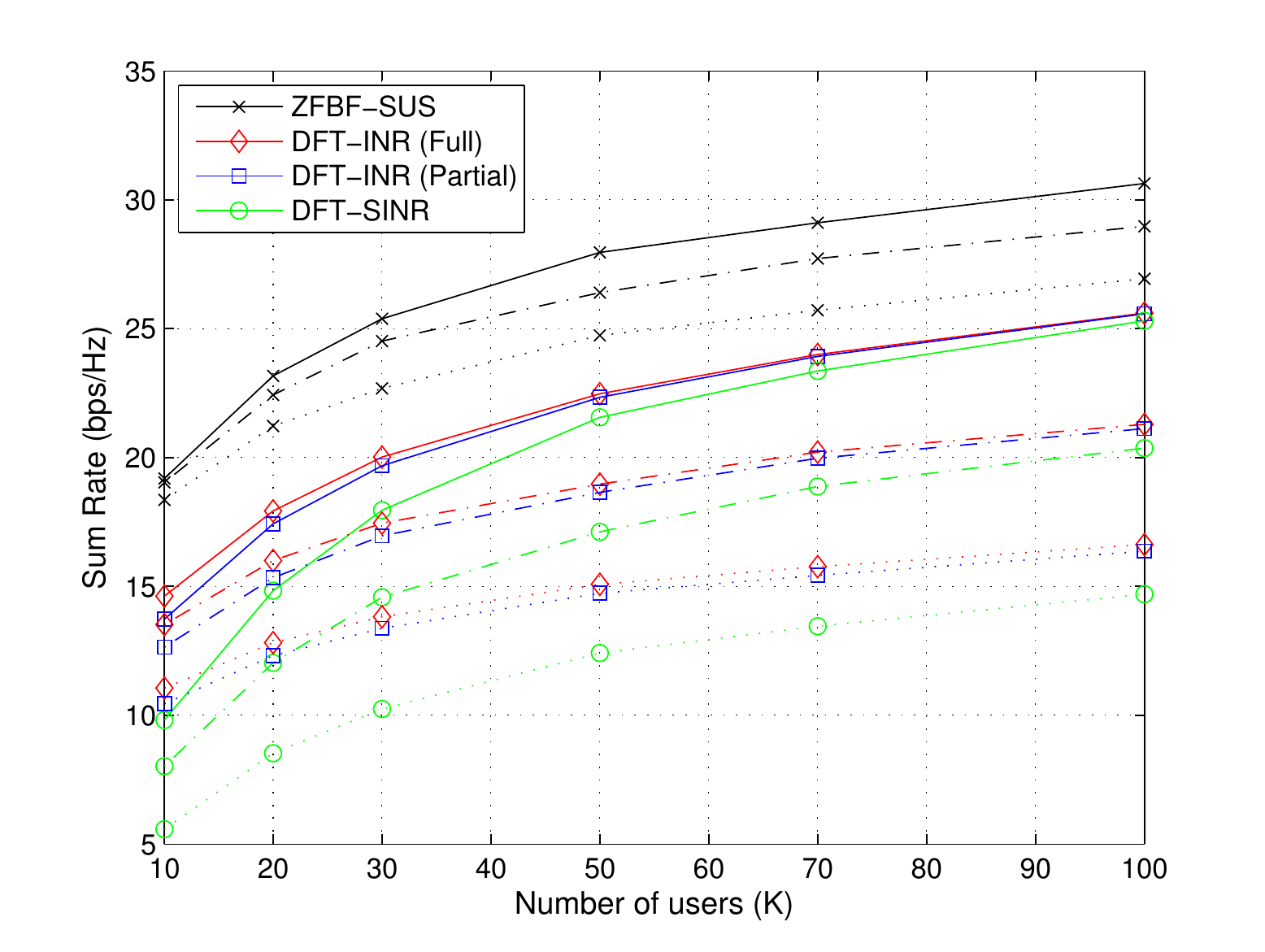} 
\caption{Impact of transmit correlation on sum rates of various MU-MIMO  schemes for varying $K$ where $M=8$ and SNR $=10$ dB. The solid, dash-dot, and dotted lines indicate $\Delta_k\in[5^o,10^o]$, $\Delta_k\in[10^o,20^o]$, and $\Delta_k\in[20^o,40^o]$, respectively.} \label{fig-5}
\end{figure}

Fig. \ref{fig-5} shows the effect of three different transmit correlation scenarios captured by the distribution of $\Delta_k$ on achievable rates of ZFBF-SUS, DFT-INR, and DFT-SINR for $M=8$. As transmit correlation increases (i.e., $\Delta_k$ decreases), the rate gap between ZFBF-SUS and DFT-INR reduces, thus implying DFT-INR lends itself to highly correlated fading channels.

\subsubsection{Sensitivity to Noisy CSIT}

\begin{figure} 
\hspace{-4mm} \includegraphics[scale=.62]{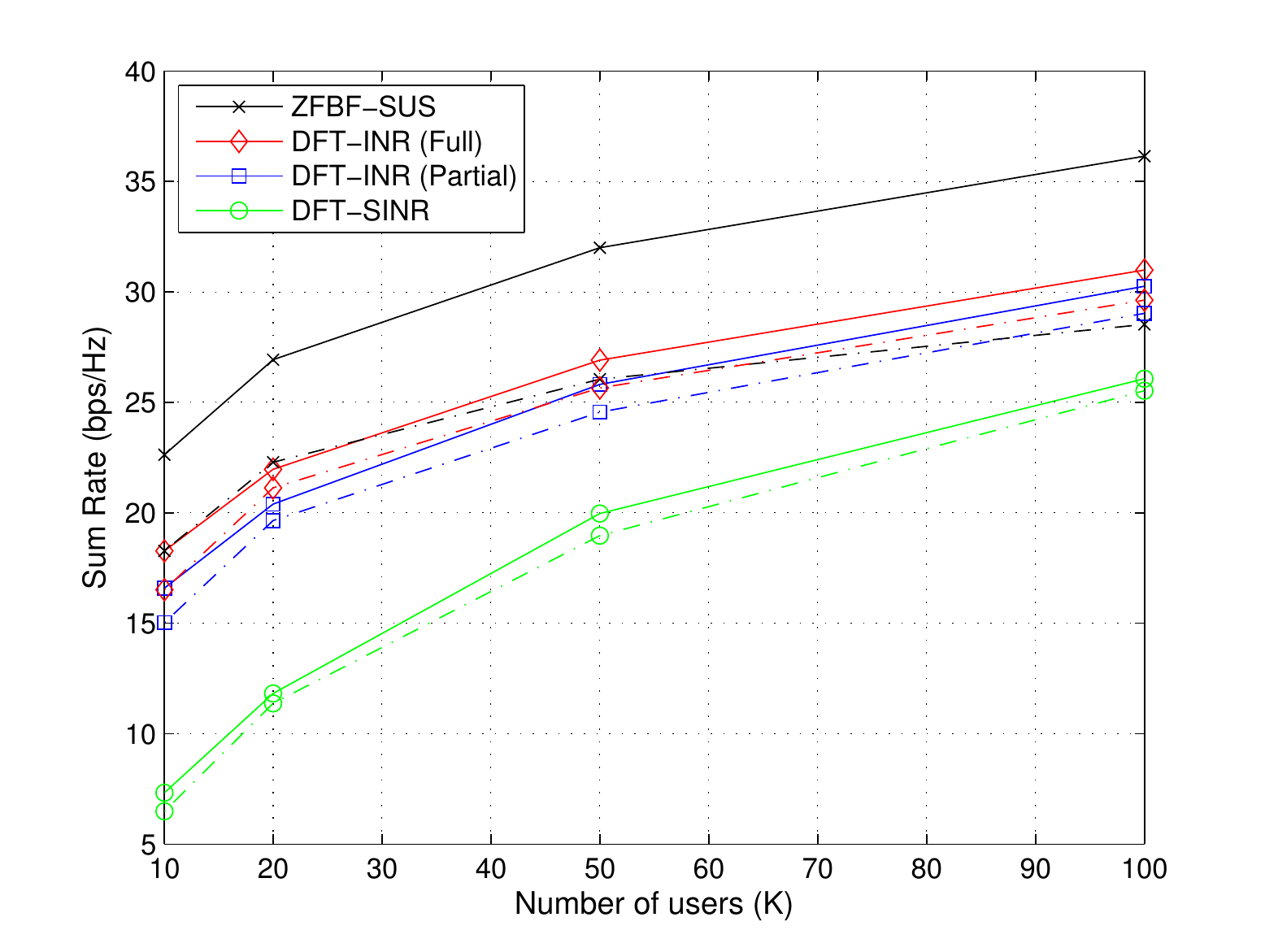} 
\caption{Impact of channel estimation error on sum rates of various MU-MIMO schemes for varying $K$ where $M=16$,  SNR $=10$ dB, and $\Delta_k\in[5^o,20^o]$. The solid lines and dash-dot lines indicate $\sigma_\text{err}^2 =0.1$ and $0.2$, respectively.} \label{fig-6}
\end{figure}

Figure \ref{fig-6} shows the sensitivity of ZFBF-SUS, DFT-INR, and DFT-SINR to noisy CSIT $\tilde{\ph}_k$, which is modeled as $\tilde{\ph}_k =\sqrt{1-\sigma_\text{err}^2}\;\ph_k + \sigma_\text{err} \pn$, where $\pn \sim\mathcal{CN} (\p0, \pI)$ and $0 \le \sigma_\text{err}^2 <1$.
As readily expected, ZFBF-SUS with noisy CSIT (i.e., channel estimation error due to imperfect CSIR) suffers relatively large performance degradation, while DFT-INR is less sensitive. 
Notice that this channel estimation error incurs additional performance degradation, aside from channel quantization error.

\subsubsection{Impact of Dedicated Pilot Overhead}

\begin{figure} 
\hspace{-4mm} \includegraphics[scale=.62]{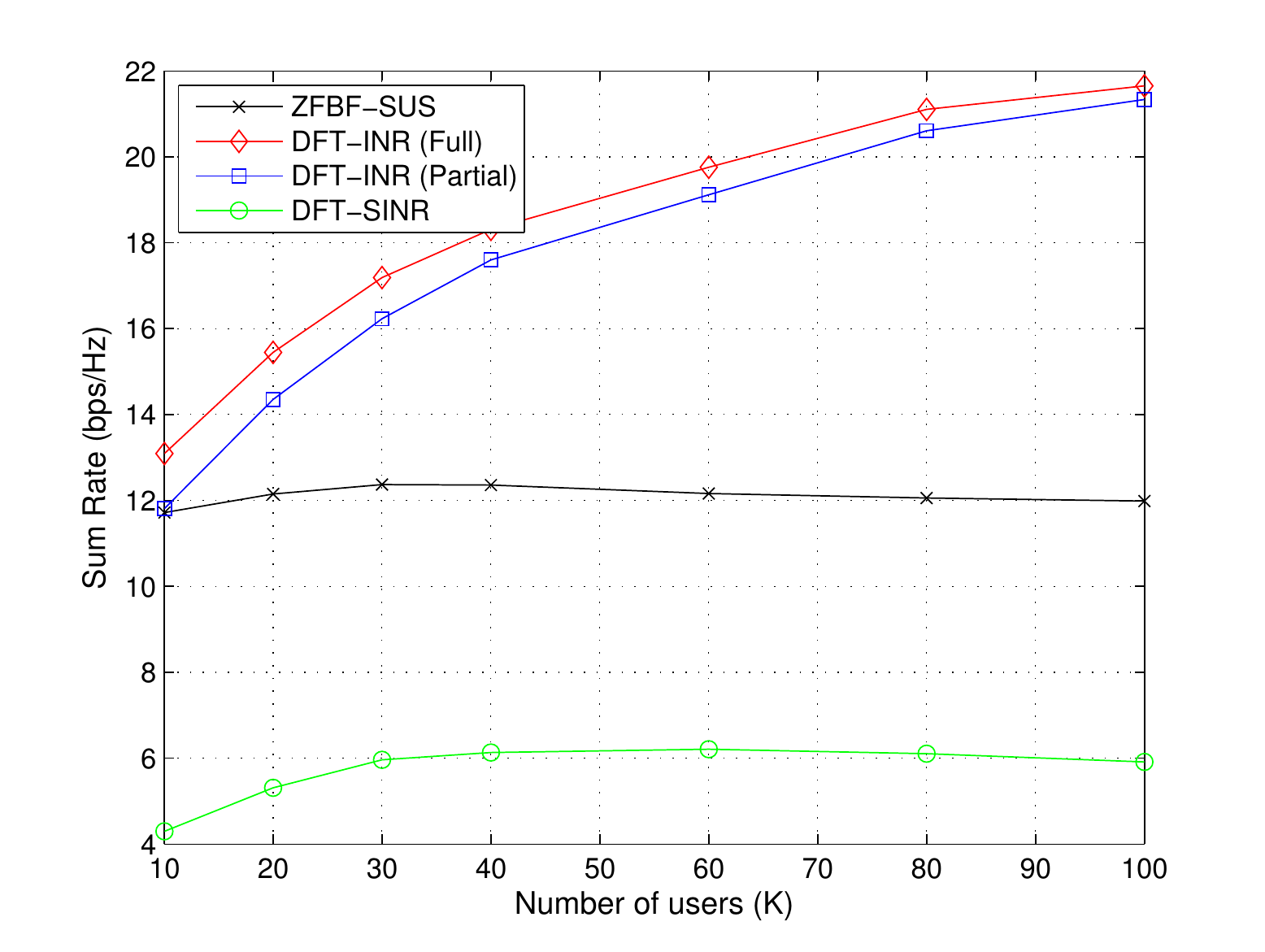} 
\caption{Impact of dedicated pilot overhead on sum rates of various MU-MIMO schemes for varying $K$ where $M=16$,  SNR $=10$ dB, $\Delta_k\in[5^o,20^o]$, and $\sigma_\text{err}^2 =0.1$.} \label{fig-7}
\end{figure}

So far, we have assumed the scheduler can use all the resources during the communication phase. However, we should consider a cost for the downlink dedicated pilot phase in realistic systems, regardless of FDD or TDD.
For convenience, we use some numerology of LTE standards to study the overhead for dedicated pilots in a more realistic fashion. Given a resource block (RB) consisting of 14 OFDM symbols with 12 subcarriers each, the LTE-advanced system allocates roughly 3 symbols for control channel and others (CSI-RS, etc) and 1 symbol per 2 data streams (DM-RS ports) for MU-MIMO downlink. For the conventional framework with SINR feedback (including ZFBF-SUS with noisy CSIT), the ratio of the remaining OFDM symbols to convey user data streams to the total symbols can be expressed as 
\begin{align} \label{eq:NR-2}
   \kappa_\text{no-INR}(i) = \frac{11-\lceil \frac{S(i)}{2}\rceil}{14}
\end{align}
where $S(i)\le \min(M,K)$ is the number of selected users at the $i$th scheduling time interval.
In contrast, we let the ratio $\kappa_\text{INR}$ of DFT-INR fixed to $10/14$. Say, the BS schedules up to $S(i)$ users only when one group of $S(i)$ users can share a common resource for their dedicated pilots with negligible interference to each other (e.g., $-20$ dB lower than each intended signal power) and the other group does so, resulting in that two user groups consume only a single symbol. 


The curves in Fig. \ref{fig-7} show the adjusted system throughputs of the schemes of interest for $M=16$ with noisy CSIT ($\sigma_\text{err}^2=0.1$), where the throughputs of ZFBF-SUS and DFT-SINR are adjusted by the scaling factor $\kappa_\text{no-INR}(i)$ every scheduling interval, while DFT-INR is scaled by $\kappa_\text{INR}=10/14$. This result points out that the dedicated pilot overhead significantly affects the system performance. The performance of DFT-INR scales with $K$, while the others do not. Noticing that $\kappa_\text{no-INR}(i)$ keeps growing as both $M$ and $K$ increase, it is  expected that DFT-INR will show larger performance benefits when $M$ is moderately large (e.g., up to $100$).

\begin{figure} 
\hspace{-4mm} \includegraphics[scale=.62]{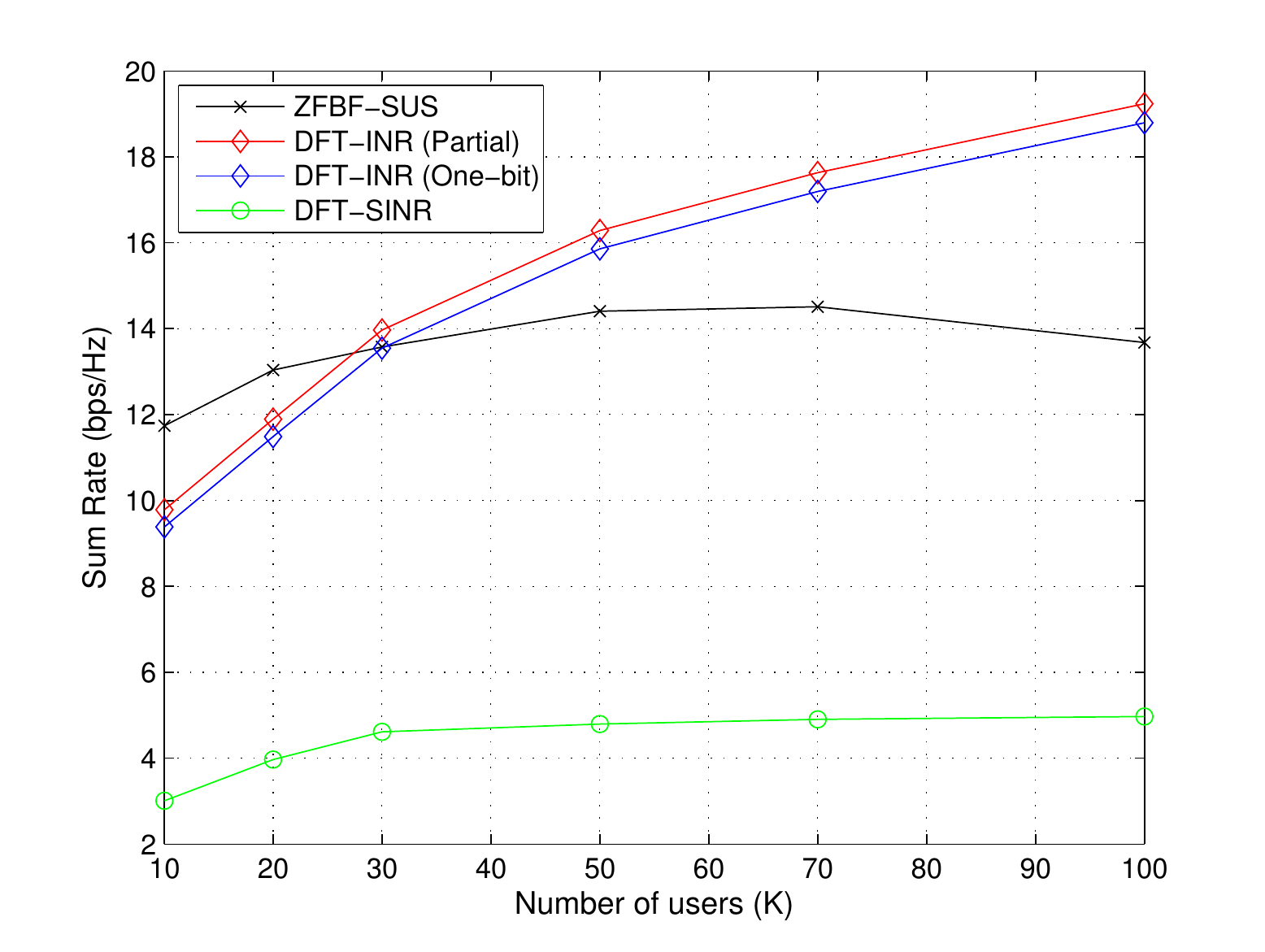} 
\caption{System throughput of the one-bit INR feedback scheme for $M=16$ with 3GPP case 1 parameters.} \label{fig-8}
\end{figure}

Finally, Fig. \ref{fig-8} shows the average system throughput of the one-bit INR feedback scheme with some large-scale channel parameters like the mean of $\Delta_k$ and path loss exponent (heterogeneous SNRs among users) obeying 3GPP case 1 (urban macro scenario) \cite{SCM} with $\sigma_\text{err}^2 =0.1$ and $\gamma =0.02$ ($-17$ dB). Compared to the full INR feedback, the one-bit INR exhibits remarkably marginal degradation in the more realistic situation.

\section{Conclusion}


We have proposed a new framework for CB-based MU-MIMO systems, which requires no explicit feedback of CDI and rather relies on a new type of CQI feedback referred to as INR.
The benefits of the new framework can be summarized as flexible scheduling, dedicated pilot overhead reduction, and arbitrary power allocation.

Assuming a CB of size $M$, limited scheduling of RBF selects the largest one of $K$ SINRs per beam, while flexible scheduling due to INR is asymptotically equivalent to select the largest one of $K\times C^{M-1}_{S-1}$ SINRs per beam, where $S$ is the number of selected users. In the i.i.d. Rayleigh fading channel, the resulting multiuser diversity gain is $S\log\log (K\times C^{M-1}_{S-1})$ for large $K\times C^{M-1}_{S-1}$ but not requiring large $K$. In contrast, the well-known $M\log\log K$ of RBF is valid only when $K$ is sufficiently large with $K\gg M$. Therefore, if $K\sim M$ with $M$ not small, the performance improvement thanks to INR feedback was shown to be considerable. 
It is arguably most interesting and surprising that taking the cost of downlink dedicated pilot into account, DFT-INR under the new limited feedback framework may significantly outperform the ideal ZFBF-SUS without CDI quantization even for the large-scale array regime.

%
%
%
%

\bibliographystyle{IEEEtran}
\bibliography{INR_ArXiv}

\end{document}